\numberwithin{equation}{section}
\crefname{lemma}{lemma}{lemmas}
\crefname{proposition}{proposition}{propositions}
\crefname{definition}{definition}{definitions}
\crefname{theorem}{theorem}{theorems}
\crefname{conjecture}{conjecture}{conjectures}
\crefname{corollary}{corollary}{corollaries}
\crefname{example}{example}{examples}
\crefname{section}{section}{sections}
\crefname{appendix}{appendix}{appendices}
\crefname{figure}{fig.}{figs.}
\crefname{equation}{eq.}{eqs.}
\crefname{table}{table}{tables}
\crefname{item}{property}{properties}
\crefname{remark}{remark}{remarks}
\crefname{problem}{}{}
\newtheorem{theorem}{Theorem}
\newtheorem{lemma}[theorem]{Lemma}
\newcommand{\C}{{\mathbb{C}}} 
\newcommand{\IC}{\C} 
\newcommand{\IN}{{\mathbb{N}}}
\newcommand{\qid}{{\mathbbm{1}}} 
\newcommand{\G}{\mathsf{G}} 
\renewcommand{\H}{\mathcal{H}} 
\DeclareMathOperator{\Id}{Id}
\DeclareMathOperator{\idd}{id}
\DeclareMathOperator{\im}{im}
\DeclareMathOperator{\GL}{GL}
\DeclareMathOperator{\tr}{tr}
\DeclareMathOperator{\Sym}{Sym}
\DeclareMathOperator{\ii}{\mathrm{i}}
\DeclareMathOperator{\ee}{\mathrm{e}}
\DeclareMathOperator{\poly}{poly}
\newcommand{\Norm}[1]{\left\|{#1}\right\|}
\newcommand{\ketbra}[2]{\ket{#1}\!\!\bra{#2}}
\newcommand{\proj}[1]{\ketbra{{#1}}{{#1}}}
\newcommand{\parahead}[1]{\noindent\textbf{{#1}}} 
\newcommand{\ComCla}[1]{\textup{\textbf{#1}}}
\newcommand{\sharpP}{\ComCla{\#P}}
\newcommand{\sharpBQP}{\ComCla{\#BQP}}
\newcommand{\sharpBPP}{\ComCla{\#BPP}}
\newcommand{\CeqP}{\ComCla{C\ensuremath{{}_=}P}}
\newcommand{\QMA}{\ComCla{QMA}}
\newcommand{\QCMA}{\ComCla{QCMA}}
\newcommand{\NQP}{\ComCla{NQP}}
\newcommand{\BQP}{\ComCla{BQP}}
\newcommand{\probl}[1]{\textup{\textsc{#1}}}
\newcommand{\la}{\lambda}
\newcommand{\aS}{\mathfrak{S}}
\newcommand{\sL}{\textsf{L}}
\newcommand{\sR}{\textsf{R}}
\newcommand{\tensor}{\ensuremath{\textstyle\bigotimes}}
\title{A remark on the quantum complexity of the Kronecker coefficients}
\author{Christian Ikenmeyer\thanks{University of Warwick, \texttt{christian.ikenmeyer@warwick.ac.uk}, supported by EPSRC grant  EP/W014882/1}~ and Sathyawageeswar Subramanian\thanks{University of Warwick, \texttt{sathya.subramanian@warwick.ac.uk}}}
\begin{document}
\raggedbottom
\maketitle

\begin{abstract}
We prove that the computation of the Kronecker coefficients of the symmetric group is contained in the complexity class $\sharpBQP$. This improves a recent result of Bravyi, Chowdhury, Gosset, Havlicek, and Zhu.
We use only the quantum computing tools that are used in their paper and additional classical representation theoretic insights. 
We also prove the analogous result for the plethysm coefficients.
\end{abstract}

{\footnotesize \textbf{Keywords}: Kronecker coefficients, plethysm coefficients, QMA, \#BQP, \#P}


\section{Introduction}
We study the Kronecker coefficients of the symmetric group, i.e., the representation theoretic multiplicities $k(\la,\mu,\nu)$ in the decomposition of the tensor product of Specht modules into irreducible representations:
\[
[\la]\otimes[\mu] = \bigoplus_\nu [\nu]^{\oplus k(\la,\mu,\nu)},
\]
where $\la,\mu,\nu\vdash n$ are partitions of $n$. Recently, Bravyi, Chowdhury, Gosset, Havlicek, and Zhu \cite{Bravyi2023quantum} proved that the computation of the product $k(\la,\mu,\nu)\cdot d(\la)\cdot d(\mu)\cdot d(\nu)$ is in the complexity class $\sharpBQP$, where $d(\la) = \dim[\la]$ is the number of standard tableaux of shape $\la$.
The input partitions are given in unary, as for example also in \cite{IMW17} or \cite{IF20}, and we also use this convention in this paper.
We refine the technique from \cite{Bravyi2023quantum} and tighten their result as follows.
\begin{theorem}
\label{thm:kronBQP}
The map $\textup{\textsc{Kron}}:(\la,\mu,\nu)\mapsto k(\la,\mu,\nu)$ is in $\sharpBQP$.
\end{theorem}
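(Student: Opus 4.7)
My plan is to improve the result of \cite{Bravyi2023quantum} by projecting onto a polynomial-size subspace of dimension exactly $k(\lambda,\mu,\nu)$ rather than onto one of dimension $k(\lambda,\mu,\nu)\cdot d(\lambda)d(\mu)d(\nu)$. The two ingredients are: (a) each Specht module $[\lambda]$ can be realised on $O(n\log n)$ qubits via Young's orthogonal form, whose basis is indexed by standard Young tableaux of shape $\lambda$ and in which the adjacent transpositions $(i,i+1)$ act by explicit, polynomial-size $2\times 2$ blocks determined by axial distance; and (b) the slice of the $[\nu]$-isotypic of $[\lambda]\otimes[\mu]$ corresponding to a single standard Young tableau of shape $\nu$ has dimension exactly $k(\lambda,\mu,\nu)$.

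Fix a standard Young tableau $T^\ast_\nu$ of shape $\nu$ with cell contents $(c_2,\ldots,c_n)$. On $V'=[\lambda]\otimes[\mu]$ with the diagonal $S_n$-action, consider the operators $\tilde X_k=\sum_{j<k}(j,k)\otimes(j,k)$ for $k=2,\ldots,n$. Since the Jucys--Murphy elements $X_k=\sum_{j<k}(j,k)$ commute in $\C[S_n]$ and the coproduct $g\mapsto g\otimes g$ is an algebra homomorphism, the $\tilde X_k$ commute pairwise on $V'$ and have integer eigenvalues bounded by $n$. On any $S_n$-module the joint spectrum of the JM operators indexes the Gelfand--Tsetlin basis of its irreducible summands, with joint eigenvalues equal to the tuples of cell contents of standard Young tableaux (and the map from SYT to content tuple is injective because the removable corners of any partition have pairwise distinct contents). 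Hence the joint eigenspace $W\subseteq V'$ with eigenvalues $(c_2,\ldots,c_n)$ lies entirely in the $[\nu]$-isotypic $[\nu]^{\oplus k(\lambda,\mu,\nu)}$ and cuts out the $T^\ast_\nu$-slice across all multiplicities, giving $\dim W=k(\lambda,\mu,\nu)$.

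The projector $\Pi_W$ is efficiently implementable. The positive-semidefinite Hamiltonian $H=\sum_{k=2}^n(\tilde X_k - c_k I)^2$ has kernel $W$ and spectral gap at least $1$, since each $\tilde X_k$ has integer spectrum so every off-kernel joint eigenvalue deviates by at least $1$ in at least one coordinate. Each $\tilde X_k$ is a sum of $O(n)$ diagonal transpositions $(j,k)\otimes(j,k)$, and an arbitrary transposition acts on a Specht module in Young's orthogonal form as a conjugate of a product of $O(n)$ polynomial-size adjacent-transposition gates; expanding $H$ as a sum of diagonal group elements yields a $\poly(n)$-term Hermitian operator with polynomial-size summands. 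Quantum phase estimation on $\mathrm{e}^{\ii H t}$ with constant precision therefore realises $\Pi_W$ with exponentially small error.

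Finally, I establish $\sharpBQP$-membership by substituting $\Pi_W$ into the BQP-verifier construction of \cite{Bravyi2023quantum}: their verifier counts classical bit-string witnesses that, when used to prepare a state in a polynomial-size Hilbert space, are accepted by an efficient quantum check projecting onto a target subspace; replacing the ambient space by $V'$ and the target by $W$ eliminates the factor $d(\lambda)d(\mu)d(\nu)$ and makes the accepting count equal to $\dim W=k(\lambda,\mu,\nu)$. The main obstacle is ensuring Yes/No acceptance on each individual witness as required by $\sharpBQP$, since the projection of a naive SYT-pair basis vector onto $W$ is generically fractional. This is handled by taking the witness set to be an orthonormal basis of $V'$ adapted to the joint JM eigenspaces, constructed by a polynomial-size unitary that diagonalises the commuting operators $\tilde X_2,\ldots,\tilde X_n$ via their integer spectra (using the same QPE subroutines used to implement $\Pi_W$), combined with an arbitrary ordering of vectors inside each multiplicity slot.
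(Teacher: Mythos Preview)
Your core idea is correct and is a genuinely different route from the paper's. Both arguments isolate a single one-dimensional ``slice'' of each copy of $[\nu]$ so that the accepting subspace has dimension exactly $k(\la,\mu,\nu)$: you slice by fixing the joint eigenvalue of the diagonally acting Jucys--Murphy operators $\tilde X_k=\Delta(X_k)$ at a chosen content tuple, whereas the paper slices by taking Young-subgroup invariants, using that $\dim[\la]^{\aS_\la}=K_{\la,\la}=1$. The paper stays in $\IC[\aS_n]^{\otimes 3}$, where all group actions are by permutation matrices in the computational basis, and therefore uses \emph{only} the two primitives already in \cite{Bravyi2023quantum} (weak Fourier sampling and generalized phase estimation onto invariants); the result is an exact PVM built from three pairwise commuting Hermitian projectors. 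Your route instead requires realising $[\la]\otimes[\mu]$ on qubits via Young's orthogonal form (so group elements act by orthogonal matrices with irrational entries), synthesising controlled group-element unitaries, and running Hamiltonian simulation plus phase estimation on a $\poly(n)$-term Hamiltonian. That is heavier machinery, but polynomially bounded, and it does yield the projector $\Pi_W$ with the right rank.

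There is, however, a genuine confusion in your last paragraph. In $\sharpBQP$ the witnesses are \emph{quantum states}, not classical bit strings, and the quantity counted is $\dim\mathcal A_w$, the dimension of the accepting subspace; \cite{Bravyi2023quantum} does not ``count classical bit-string witnesses''. Once you have an efficient (approximate) projector $\Pi_W$ with a constant spectral gap, you are already done: take $\mathcal A_w=W$ and $\mathcal R_w=W^\perp$ in the ambient qubit space (also rejecting bit strings that do not encode a valid pair of standard tableaux). There is no need to manufacture a polynomial-size unitary that diagonalises all the $\tilde X_k$ and enumerates basis vectors inside each multiplicity slot; indeed, that would be a strictly stronger statement than $\sharpBQP$ asks for, and you have not justified it. Simply delete that last paragraph and the proof closes.
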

We also obtain an analogous result for the plethysm coefficient $a_\la(d,m)$, which is the representation theoretic multiplicity of the irreducible $\GL(V)$-representation $S_\la(V)$ in the nested symmetric power $S^d(S^m(V))$.
\begin{theorem}
\label{thm:plethBQP}
The map $\textup{\textsc{Pleth}}:(d,m,\la)\mapsto a_\la(d,m)$ is in $\sharpBQP$.
\end{theorem}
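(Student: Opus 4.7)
The plan is to reduce Theorem~\ref{thm:plethBQP} to the same framework used for Theorem~\ref{thm:kronBQP}.

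First I would carry out a purely classical representation-theoretic reduction of the plethysm coefficient to an invariant-subspace dimension. Writing $V = \C^N$ for $N$ sufficiently large, the natural identification $S^d(S^m V) = (V^{\otimes dm})^{S_m \wr S_d}$, where $S_m \wr S_d \hookrightarrow S_{dm}$ embeds so that the $d$ inner copies of $S_m$ permute within $d$ disjoint blocks of size $m$ and the outer $S_d$ permutes the blocks, combined with Schur--Weyl duality $V^{\otimes dm} = \bigoplus_{\mu \vdash dm} S_\mu(V) \otimes [\mu]$, yields after taking $S_m \wr S_d$-invariants
\[
a_\la(d,m) \;=\; \dim\,[\la]^{S_m \wr S_d}.
\]
This mirrors the Kronecker identity $k(\la,\mu,\nu) = \dim ([\la] \otimes [\mu] \otimes [\nu])^{S_n}$ and casts the plethysm coefficient in exactly the form required by the Kronecker algorithm: the dimension of the $H$-invariants in a Specht module for a subgroup $H$ of a symmetric group.

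Next I would plug the new subgroup and module into the $\sharpBQP$-verifier constructed in the proof of Theorem~\ref{thm:kronBQP}. Following the same toolbox as \cite{Bravyi2023quantum}, that verifier needs only two quantum ingredients: an efficient implementation of the Specht-module action $\rho_\la(g)$ for $g \in S_{dm}$ (via Young's orthogonal representation, accessible through the Schur transform) and an efficient preparation of a uniform superposition over the averaging group $H$. For $H = S_m \wr S_d \cong S_m^d \rtimes S_d$ the second requirement is cheap: combining $d$ independent uniform superpositions over $S_m$ with one over $S_d$ gives the uniform superposition over $H$, and this is efficient since $m$ and $d$ are given in unary. The Reynolds operator $\Pi = |H|^{-1} \sum_{h \in H} \rho_\la(h)$ is then implementable coherently by the standard ancilla-and-measure construction, and its trace on $[\la]$ equals $a_\la(d,m)$.

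The main step requiring care, and the one I expect to be the main obstacle, is that the \emph{tightening} which removes the dimension factors $d(\la) d(\mu) d(\nu)$ in Theorem~\ref{thm:kronBQP} must be formulated abstractly enough to apply to the wreath-product embedding. If the classical representation-theoretic insight behind that tightening uses only the generic template ``$\dim W^H$ for an $S_n$-Specht module $W$ and an efficiently quantumly accessible subgroup $H \subseteq S_n$'', then Theorem~\ref{thm:plethBQP} follows immediately by specialising $(W,H) = ([\la],\, S_m \wr S_d)$. If instead the tightening depends on combinatorial features specific to SYT triples or to the diagonal action of $S_n$, then the plethysm case will need a parallel construction, most likely one that identifies a distinguished subset of the Young basis of $[\la]$ adapted to the chain $S_m \wr S_d \subset S_{dm}$ whose cardinality is exactly $a_\la(d,m)$.
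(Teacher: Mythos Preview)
Your reduction $a_\la(d,m)=\dim[\la]^{\aS_m\wr\aS_d}$ and the idea of feeding the subgroup $H=\aS_m\wr\aS_d$ into the Kronecker machinery is exactly what the paper does. The uncertainty you flag in the last paragraph is the only real gap, and it resolves in favor of your first alternative: the tightening step in Theorem~\ref{thm:kronBQP} is entirely generic and needs no combinatorics specific to the diagonal $\aS_n$.

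Concretely, the paper works in the group algebra $\IC[\aS_n]$ (with $n=dm$), not inside a single Specht module. The verifier applies three commuting Hermitian projectors: weak Fourier sampling $\Pi^\la$ onto $[\la]_\sL\otimes[\la]_\sR$, then $P_\sL^{\aS_m\wr\aS_d}$ on the left factor (implemented as the composition $P_\sL^{\aS_d}\circ P_\sL^{\aS_{(m^d)}}$ of two generalized phase estimations, using the block-permutation embedding $\varphi$ for the outer $\aS_d$), and finally $P_\sR^{\aS_\la}$ on the right factor. The image is $[\la]_\sL^{\aS_m\wr\aS_d}\otimes[\la]_\sR^{\aS_\la}$, whose dimension is $a_\la(d,m)\cdot 1$ by Lemma~\ref{lem:dimone} ($K_{\la,\la}=1$). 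The point is that the ``tightening'' projector $P_\sR^{\aS_\la}$ acts purely on the right and is therefore completely decoupled from whatever subgroup $H$ acts on the left; nothing about the Kronecker diagonal was used. So once you have written down $a_\la(d,m)=\dim[\la]^H$ for an $H$ admitting efficient generalized phase estimation, the rest is verbatim the Kronecker argument. Your suggested direct implementation inside $[\la]$ via Young's orthogonal form is not needed and is in fact what the $\IC[\aS_n]$ setup is designed to avoid.
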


The question whether $\textsc{Pleth}\in\sharpP$ is a formalization of question 9 in Stanley's list of open positivity problems in algebraic combinatorics, \cite{Sta00}, see also \cite{IP22}.
The question whether $\textsc{Kron}\in\sharpP$ is a formalization of question 10 in \cite{Sta00}.
It has been resolved in several subcases, see \cite{remm:89,RW94,BO06,BMS15,T15,IMW17,Bla17,Liu17,BL18}.
The question has been explicitly mentioned again in \cite{Str22}, and in the surveys \cite{Pak22,Pan23a,Pan23b}.
For the plethysm problem, $\sharpP$ expressions have been found in some cases as well \cite{CDW12,MK16,PW19,IF20,dBPW21}. 
Both coefficients play an important role in geometric complexity theory, an approach towards computational complexity lower bounds via algebraic geometry and representation theory \cite{MS01,MS08,BLMW11,BI11,BI13,Kum15,BMS15,BHI17,IP17,BI18,BIP19,IK20,DIP20}.

\subsection{No rescaling after the computation}
\label{subsec:postdivision}
Let $\{0,1\}^*$ denote the set of finite length bit strings.
Recall that $\sharpP$ is the class of functions $f:\{0,1\}^*\to\IN$ such that there exists a nondeterministic polynomial time Turing machine $M$ such that for all inputs $w\in\{0,1\}^*$ the number of accepting paths of $M$ is exactly $f(w)$.

Let \probl{ScaledKron} be the function $(\la,\mu,\nu)\mapsto d(\la)\cdot d(\mu)\cdot d(\nu)\cdot k(\la,\mu,\nu)$.
\cite{Bravyi2023quantum} prove $\probl{ScaledKron}\in\sharpBQP$, see the definition in \S\ref{subsec:quantum},
but they do not prove that $\probl{Kron}\in\sharpBQP$.
Since $d(\la)$ can be computed in polynomial time via the hook length formula, it follows that
if $\probl{Kron}\in\sharpP$, then also $\probl{ScaledKron}\in\sharpP$, but the converse is not clear, because $\sharpP$ crucially does not allow any division after the counting.
In this section we list several examples of functions in $\sharpP$ that are multiples of simple functions (the function is always constant in these examples), but when divided by those, they are not expected to be in $\sharpP$, or at least it is wide open; see also the discussion in \cite{IP22} on this recent topic. We discuss these examples to explain our motivation to refine
$\probl{ScaledKron}\in\sharpBQP$ to $\probl{Kron}\in\sharpBQP$. These examples are not necessary for understanding our results.  

We first start with a naive example.
The standard example where a $\sharpP$ function \emph{can} be divided by a constant and still remain in $\sharpP$ is \probl{\#3Coloring}: Given a graph with at least one edge, determine the number of ways to color its vertices with 3 colors, so that no two adjacent vertices have the same color.
Since $\probl{3Coloring}\in\ComCla{NP}$, clearly $\probl{\#3Coloring}\in\sharpP$. Clearly $\forall G: \probl{\#3Coloring}(G)$ is divisible by 6, because the symmetric group $\aS_3$ acts on colorings by permuting the colors (here we used that $G$ has at least one edge).
We define $\frac{\probl{\#3Coloring}}{6}$
via $\frac{\probl{\#3Coloring}}{6}(G) = \frac{\probl{\#3Coloring}(G)}{6}$.
We have $\frac{\probl{\#3Coloring}}{6}\in\sharpP$, because for each 3-coloring $c$ it is easy to generate the set $\aS_3 c$ of all six 3-colorings, and decide if $c$ is the lexicographically smallest element in $\aS_3 c$ or not.
The Turing machine for 
$\frac{\probl{\#3Coloring}}{6}$
only accepts the 3-colorings $c$ that are lexicographically smallest in $\aS_3 c$. 

But for other problems it might not always be possible to easily generate the corresponding groupings from which one wants to select the lexicographically smallest element.

For an edge $e$ in a graph $G$ let $\probl{\#HCE}(G,e)$ denote the number of Hamiltonian cycles of $G$ that use $e$.
Clearly, $\probl{\#HCE} \in \sharpP$.
Smith's theorem \cite{Tut46} says that for any fixed edge $e$ in a cubic graph $G$ we have that $\probl{\#HCE}(G,e)$ is even.
However, finding a polynomial time self-inverse algorithm that maps one Hamiltonian cycle to its partner is a major open question.
In particular, it is an open question whether or not $\frac{\probl{\#HCE}}{2} \in \sharpP$.
The Price-Thomason lollipop algorithm \cite{Pri77,Tho78} for mapping a Hamiltonian cycle to its partner uses exponential time in the worst case \cite{Kra99,Cam01}.
The algorithm explores the so-called exchange graph $X$ of a graph $G$: A vertex in $X$ corresponds to a Hamiltonian path or cycle in $G$. 
\cite{CE99} list more problems of this kind, and they write:
\begin{quote}
Each proof consists of describing an ``exchange graph'' $X$, quite large compared to~$G$. [...] Each of these theorems is not so easy to prove without seeing the exchange graph.
\end{quote}

In the same vein,
in \cite{IP22} the problem
$\probl{\#COUNTALL-PPA${}_{\probl{Leaf}}$}$ 
is defined as follows:
Given two Boolean circuits $C_1$ and $C_2$, each with $n$ inputs and $n$ outputs,
let $N(v):=\{C_1(v),C_2(v)\}$.
Define a graph $G$ on the vertex set $\{0,1\}^n$
by defining that $v\in \{0,1\}^n$ and $w\in\{0,1\}^n$ are adjacent iff $v\neq w$ and $v \in N(w)$ and $w \in N(v)$.
Clearly, each vertex in $G$ is either isolated, or of degree 1, or of degree 2.
On input $(C_1,C_2)$, the function
$\probl{\#COUNTALL-PPA${}_{\probl{Leaf}}$}$
outputs the number of degree 1 vertices of~$G$.
This number is always even, but there exists an oracle $A$ such that $\frac{(\probl{\#COUNTALL-PPA${}_{\probl{Leaf}}$})^A}{2}\notin\sharpP^A$.

Another example is a promise problem using Karamata's inequality, which is also an even $\sharpP$ function, and when divided by 2 there is an oracle separation from $\sharpP$, see 
\cite[\S III.C]{IP22}.

In conclusion, in general it is not always the case that membership in a counting class is preserved under division by even very simple functions. Therefore it is valuable to refine counting results of scaled functions, such as $\probl{ScaledKron}\in\sharpBQP$
to
$\probl{Kron}\in\sharpBQP$.
If our refinement 
to
$\probl{Kron}\in\sharpBQP$ in Theorem~\ref{thm:kronBQP}
would \emph{not} have been easily possible, then it would have been justified to conjecture that $\probl{Kron}$ shares the fate of the problems listed in this subsection, which would have given a justification to conjecture $\probl{Kron}\notin\sharpP$.

\subsection{Remarks}
Even without our refinement, one can remark the following.
For any function $f \in \sharpP$, if the corresponding vanishing problem $[f = 0]$ is $\CeqP$-hard, then the polynomial hierarchy collapses to the second level, i.e., $\ComCla{PH} = \Sigma^\textsf{p}_2$, see Proposition~3.1.1 in \cite{IPP23}.
Let us assume for a moment that this approach works for $\probl{Kron}$,
i.e., assume that one could prove that $[\probl{Kron}=0]$ is $\CeqP$-hard,
i.e., $[\probl{Kron}>0]$ is \NQP-hard.
Since 
$[\probl{Kron}>0] \in \ComCla{QMA}$ by \cite{Bravyi2023quantum},
this would imply $\NQP \subseteq \QMA$.
See \cite{KMY03} for the relationship between $\NQP$ and~$\QMA$.
Therefore, a proof of $\probl{Kron}\notin\sharpP$ cannot just use the techniques in \cite{IPP23} without proving new results in quantum complexity theory.

Even proving $\probl{Kron}\in\sharpBPP$ is a major open problem, and also its quantum verifier analogue of counting $\QCMA$ witnesses: classical polynomial sized witnesses for $\probl{Kron}$ that can be verified by a $\BQP$ machine. Since these classes work with classical witnesses, from a combinatorial perspective they seem more useful than $\sharpBQP$, which is about ``counting $\QMA$ quantum witnesses''.
There exists a randomized classical oracle $A$ with $\QCMA^A \subsetneqq \QMA^A$, see \cite{NN23}.

The statement $\probl{Kron}\in\sharpBQP$ has been remarked without proof in \cite{HCW},
but as far as we know \cite{Bravyi2023quantum} is the first publication about the subject.

\section{Preliminaries}
\subsection{Representation theory}
We work over the complex numbers $\IC$.
Let $\aS_n$ denote the symmetric group on $n$ symbols.
A monotone nonincreasing finite sequence of nonnegative integers is called a \emph{partition}.
For a partition $\la$ we write $|\la| := \sum_{i\in\IN}\la_i$.
We also write $\la\vdash n$ to indicate that $\la$ is a partition with $|\la|=n$.
The length of a partition $\la$ is defined as $\ell(\la) := \max\{j \mid \la_j > 0\}$.
Let $\la^T$ denote the transpose partition of $\la$, which is defined via $\la^T_i := |\{j \mid \la_j\geq i\}|$.
The Young diagram of a partition $\la$ is defined as the set $\{(i,j)\mid 1 \leq i \leq \la^T_j, \ 1 \leq j \leq \la_i \}$, which is usually depicted as a top-left justified set of boxes with $\la_i$ boxes in row~$i$. For example, the Young diagram for $(5,3)$ is $\ytableausetup{smalltableaux}\ydiagram{5,3}$.
A Young diagram with $n$ boxes can be encoded for example as a sequence of $n-1$ bits, where the Young diagram is read row-wise from left to right and top to bottom, and a 1 represents a line break in the Young diagram, while a 0 represents the absence of a line break.
The Young diagram of $\la^T$ is obtained from the Young diagram of $\la$ by reflecting it about the main diagonal.
A representation of $\aS_n$ is a finite dimensional complex vector space $V$ with a group homomorphism $\varrho:\aS_n\to \GL(V)$.
A representation is called irreducible if it has no nontrivial subrepresentations.
The irreducible representations of $\aS_n$ are indexed by partitions $\la\vdash n$, see e.g.~\cite[\S7]{Ful97}.
For $\la\vdash n$ we denote by $[\la]$ the irreducible representation of $\aS_n$ of type~$\la$, called the Specht module. For example, $[(n)]$ is the trivial representation, and $[(1,1,\ldots,1)]$ is the sign representation. 
For an $\aS_n$-representation $V$ let $V^\la$ denote its $\la$-isotypic component, i.e., the direct sum of all its irreducible representations of isomorphism type $\la$.
Hence, with this notation, if $V$ is an $\aS_n$-representation, then the linear subspace of $\aS_n$-invariants is denoted by $V^{(n)}$.
If we have two commuting actions of $\aS_n$ on $V$, then we write $V^{\la,\ast}$ for the $\la$-isotypic component of the first action, and $V^{\ast,\la}$ for the $\la$-isotypic component of the second action, and $V^{\la,\mu} = V^{\la,\ast} \cap V^{\ast,\mu}$
for the $(\la,\mu)$-isotypic component for the action of the product group $\aS_n\times\aS_n$.
The group algebra $\IC[\aS_n]$ is the $n!$ dimensional vector space of formal linear combinations of permutations from~$\aS_n$.
On $\IC[\aS_n]$ we have two commuting actions, $\cdot_\sL$ and $\cdot_\sR$, as follows. Let $\pi \in\aS_n$ and let $\sum_{\sigma\in\aS_n}c_\sigma\,\sigma \in \IC[\aS_n]$. We define
\begin{itemize}
    \item $\pi \cdot_\sL (\sum_{\sigma\in\aS_n}c_\sigma\,\sigma) := \sum_{\sigma\in\aS_n}c_\sigma \,\pi\sigma$, and
    \item $\pi \cdot_\sR (\sum_{\sigma\in\aS_n}c_\sigma\,\sigma) := \sum_{\sigma\in\aS_n}c_\sigma\, \sigma\pi^{-1}$.
\end{itemize}

This gives an action of $\aS_n\times \aS_n$ on $\IC[\aS_n]$ and its decomposition into irreducibles is well known, e.g., \cite[Thm.~4.2.7]{GW09} (note that Specht modules are self-dual, because the characters of the symmetric group are real-valued):
\[
\IC[\aS_n]\simeq\bigoplus_{\la\vdash n} \ [\la]\otimes[\la].
\]
Hence, $\IC[\aS_n]^{\la,\ast} = \IC[\aS_n]^{\ast,\la} = \IC[\aS_n]^{\la,\la}$.

We embed $\aS_n\hookrightarrow\aS_n\times\aS_n\times\aS_n$ via $\pi\mapsto(\pi,\pi,\pi)$.
Given partitions $\la$, $\mu$, $\nu$,
using this embedding,
the Kronecker coefficient $k(\la,\mu,\nu)$ is defined as the dimension of the $\aS_n$-invariant space
\begin{equation}\label{eq:defkron}
k(\la,\mu,\nu) \ := \ \dim([\la]\otimes[\mu]\otimes[\nu])^{(n)}.
\end{equation}
For several equivalent definitions, see e.g.\ \cite[\S4.4]{Ike12}. \cite{Bravyi2023quantum} use the character theoretic characterization $k(\la,\mu,\nu) = \frac{1}{n!}\sum_{\pi\in\aS_n}\chi_\la(\pi)\chi_\mu(\pi)\chi_\nu(\pi)$, but in this paper we put more focus on invariant spaces instead.

For a partition $\la\vdash n$, the Young subgroup $\aS_\la \subseteq \aS_n$ is defined as $\aS_\la := \aS_{\la_1}\times \aS_{\la_2}\times \cdots \aS_{\ell(\la)}$ where the embedding is the usual one:
Recall that $\aS_n$ is the group of bijections from $\{1,\ldots,n\}$ to itself, whereas each factor
$\aS_{\la_i}$ is the group of bijections from $\{\la_1+\cdots+\la_{i-1}+1,\ldots,\la_1+\cdots+\la_{i}\}$ to itself.
For example, $\aS_{(2,2)} = \{\idd,(1\,2),(3\,4),(1\,2)(3\,4)\}$ is isomorphic to the Klein four-group.

A semistandard tableau of shape $\la$ is defined as a labeling of the boxes of the Young diagram of $\la$ with positive integers, such that the numbers strictly increase down each column, and are non-decreasing along each row. The content of a semistandard Young tableau $t$ is the vector that records at position $i$ the number of occurrences of label $i$ in~$t$.
The Kostka number $K_{\la,\mu}$ is defined as the number of semistandard tableaux of shape $\la$ and content $\mu$. For example, $K_{(3,1),(2,1,1)}=2$ because there are two semistandard tableaux of shape $(3,1)$ and content $(2,1,1)$: $\ytableausetup{smalltableaux}\ytableaushort{112,3}$ and~$\ytableausetup{smalltableaux}\ytableaushort{113,2}$.
Clearly, $K_{\la,\la}=1$, because there exists a unique semistandard tableau of shape $\la$ and content $\la$: The tableau with only entries $i$ in row $i$.
For a partition $\la$ and an $\aS_n$-representation $V$ let $V^{\aS_\la} \subseteq V$ denote the linear subspace of $\aS_\la$-invariants.
The following lemma is well known, but we give a proof based on Pieri's rule and invariants for the sake of completeness and because this is the main driving lemma of our Theorems~\ref{thm:kronBQP} and~\ref{thm:plethBQP}.

\begin{lemma}\label{lem:dimone}
For all $\la,\mu\vdash n$ we have $\dim([\la]^{\aS_\mu}) = K_{\la,\mu}$.
In particular, $\dim([\la]^{\aS_\la})=1$.
\end{lemma}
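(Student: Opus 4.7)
The plan is to use Frobenius reciprocity to convert the invariant-space count into an $\aS_n$-multiplicity inside Young's permutation module, and then to apply Pieri's rule inductively to identify this multiplicity with a count of semistandard tableaux. Via the canonical identification of an $\aS_\mu$-invariant vector $v\in[\la]$ with the $\aS_\mu$-linear map $\IC\to[\la]$ sending $1\mapsto v$, one has $[\la]^{\aS_\mu}\cong\mathrm{Hom}_{\aS_\mu}(\IC,[\la])$, where $\IC$ carries the trivial $\aS_\mu$-action. Frobenius reciprocity rewrites this as $\mathrm{Hom}_{\aS_n}(M^\mu,[\la])$, where $M^\mu:=\mathrm{Ind}_{\aS_\mu}^{\aS_n}\IC$. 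Thus the lemma reduces to Young's rule
\[
M^\mu \ \cong \ \bigoplus_{\la\vdash n} K_{\la,\mu}\,[\la].
\]

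To prove Young's rule I would induct on $k:=\ell(\mu)$, peeling off the last part. Write $\mu=(\mu'',\mu_k)$ with $\mu'':=(\mu_1,\ldots,\mu_{k-1})$. Since $\aS_\mu=\aS_{\mu''}\times\aS_{\mu_k}$ and the trivial representation of $\aS_{\mu_k}$ is $[(\mu_k)]$, induction in stages gives
\[
M^\mu \ = \ \mathrm{Ind}_{\aS_{n-\mu_k}\times\aS_{\mu_k}}^{\aS_n}\bigl(M^{\mu''}\otimes[(\mu_k)]\bigr).
\]
The inductive hypothesis rewrites $M^{\mu''}$ as $\bigoplus_\nu K_{\nu,\mu''}[\nu]$, and Pieri's rule for the symmetric group expresses $\mathrm{Ind}_{\aS_r\times\aS_s}^{\aS_{r+s}}\bigl([\nu]\otimes[(s)]\bigr)$ as $\bigoplus_\la[\la]$, summed over partitions $\la$ such that $\la/\nu$ is a horizontal strip of size $s$. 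Combining these, the multiplicity of $[\la]$ in $M^\mu$ becomes $\sum_\nu K_{\nu,\mu''}$, summed over $\nu$ with $\la/\nu$ a horizontal strip of size $\mu_k$. A semistandard tableau of shape $\la$ and content $\mu$ decomposes uniquely into its subshape $\nu$ of cells with entry $\le k-1$ (an SSYT of shape $\nu$ and content $\mu''$) together with the horizontal strip of cells labelled $k$; this bijection identifies the sum with $K_{\la,\mu}$, closing the induction. The base case $k=1$ is immediate since $M^{(n)}=[(n)]$ and $K_{\la,(n)}=\delta_{\la,(n)}$.

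The ``In particular'' statement follows because the unique semistandard tableau of shape $\la$ and content $\la$ places $i$ in every cell of row $i$, giving $K_{\la,\la}=1$. I do not foresee any real obstacle, since the argument is classical; the only care required is with the notation (the paper introduces neither induced representations nor $\mathrm{Hom}$-spaces) and with bookkeeping the order in which strips are peeled off, so that the chain bijection with semistandard tableaux applies cleanly.
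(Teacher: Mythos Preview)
Your argument is correct and classical. It is essentially the Frobenius-dual of the paper's proof: the paper stays on the restriction side and computes $[\la]^{\aS_\mu}$ by iteratively taking $\aS_{\mu_i}$-invariants, applying the restriction form of Pieri's rule $[\la]^{\aS_i}\cong\bigoplus_{\nu}[\nu]$ (sum over $\nu$ with $\la/\nu$ a horizontal $i$-strip) at each step, and recording the removed strips as the entries of a semistandard tableau. You instead pass through Frobenius reciprocity to the induction side, prove Young's rule $M^\mu\cong\bigoplus_\la K_{\la,\mu}[\la]$ by induction in stages using the induction form of Pieri, and read off the multiplicity. The combinatorial core---building an SSYT one horizontal strip at a time---is identical in both arguments. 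The paper's route is marginally more self-contained in that it avoids introducing $\mathrm{Hom}$-spaces and induced representations (which the paper does not otherwise use); your route has the advantage of making the link to the well-known permutation module $M^\mu$ and Young's rule explicit, which some readers may find more recognizable.
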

\begin{proof}
We will use the fact that the $\aS_\mu$-invariants can be iteratively obtained: $[\la]^{\aS_\mu} = ((\ldots([\la]^{(\mu_{\ell(\mu)})})^{(\mu_{\ell(\mu)-1})})^{\cdots})^{(\mu_{1})}$.
For $\mu \subseteq \la$ we write $\mu\sqsubset\la$ if
for all $i$ we have $\la^T_i-\mu^T_i \leq 1$.
Pictorially, the set of boxes $\la\setminus\mu$ contains at most 1 box in each column.
Let $i = \mu_{\ell(\mu)}$, $j=n-i$, $\aS_j\times\aS_i\hookrightarrow\aS_n$.
We use Pieri's rule, \cite[Cor.~3.5.13]{CSTS10}, which states how the $\aS_i$-invariant space $[\la]^{\aS_i}$ of the irreducible $\aS_n$-representation $[\la]$ decomposes into irreducible $\aS_j$-representations.
\[
[\la]^{\aS_i} = \bigoplus_{\substack{
\mu\vdash j, \, \mu\sqsubset\la
}}
[\mu]
\]
We take a Young diagram of shape $\la$ and 
mark the boxes in the difference $\la\setminus\mu$ with $\ell(\mu)$, and then repeat this process on each summand $[\mu]$
for
$i=\mu_{\ell(\mu)-1}$,
and $n$ decreased by~$i$. We continue in this way, and the process finishes when the boxes are marked with 1. All boxes are then marked and the result is a semistandard tableau of shape $\la$ and content $\mu$ by construction.
It is clear that for each semistandard tableau there is exactly one way to obtain it as the output of this process.
\end{proof}

\subsection{Quantum algorithms and \texorpdfstring{$\sharpBQP$}{\#BQP}}
\label{subsec:quantum}
We review some quantum computing preliminaries and notation, and refer to standard textbooks such as \cite{Nielsen2010} for details. Let $\H_n:={\bigotimes^n}\C^2$ be the Hilbert space of states of $n$-qubits spanned by the computational basis $\{\ket{x} \mid x\in\{0,1\}^n\}$, and denote $N:=\dim\H_n=2^n$.
Pure states of an $n$-qubit quantum system are projective vectors of unit $\ell_2$-norm in $\H$, where by projective we mean that for all vectors $\ket\psi\in\H$, we identify $\ee^{\ii\theta}\ket\psi$ with $\ket\psi$ for all $\theta\in[0,2\pi)$.

Let $\G\subset U(N)$ be a universal gateset and let $\qid_n\in U(N)$ be the $N\times N$ identity matrix. That is, for every $U\in U(N)$ and $\epsilon>0$ there exist $k=k(\epsilon,n)\in\IN$ and $g_1,\ldots,g_k\in\G$ such that $U_k:=g_k g_{k-1}\ldots g_1$ is $\epsilon$-close to $U$ in spectral norm. We call any product $C=g_s\ldots g_1$ with $\forall i:g_i\in\G$ a quantum circuit of size $s$ in the gateset $\G$. We will henceforth work with standard gatesets consisting of finitely many one and two qubit gates.

Any decomposition $\H=\H_1\oplus\H_2\oplus\ldots\oplus\H_k$ for $k\leq N$ defines a measurement on $\H$, but not all measurements can be performed efficiently. In this paper we will only be interested in Projection Valued Measures (PVM), defined by a collection of idempotent Hermitian endomorphisms $M=\{\Pi_1,\ldots,\Pi_k\}$ so that for each $i\in 1,\ldots, k$ we have $\Pi_i\H=\H_i$ (i.e.\ $\Pi_i$ projects onto $\H_i$), and $\Pi_i\Pi_j=\delta_{ij} \Pi_j$, i.e.\ the projections are pairwise orthogonal. All projectors that we will use in this paper are Hermitian,. Measuring a pure state $\ket\psi$ by $M$ produces as output a random variable $X$ taking values $1,\ldots,k$ with probability $\Pr(X=i)=\braket{\psi|\Pi_i|\psi}$, and a post-measurement state $\ket{\psi_i}:=\frac{\Pi_i\ket\psi}{\Pr(X=i)}$. Note that since $\Pi_i$ is a projector, $\braket{\psi|\Pi_i|\psi}=\Norm{\Pi_i\ket{\psi}}^2$. A computational basis measurement of $\H_n$ is the PVM defined by $\{\Pi_x=\proj{x}\mid x\in\{0,1\}^n\}$. A single qubit computational basis measurement is called accepting if the outcome is one, and rejecting otherwise.

We now formally define the two main complexity classes that the results of this article are concerned with.

\paragraph{QMA:}
A language $L\in\{0,1\}^*$ is in the class $\QMA$
if there is a polynomial time Turing machine that on input $w$, $|w|=n$, outputs a quantum circuit $C_w$ acting on $m+a$ qubits with $m,a=\poly n$,
such that for every $n\in\IN$ and every $w\in\{0,1\}^n$ we have that
\begin{itemize}
    \item $C_{w}$ takes as input $\ket{\psi}\in\H_m$ and $a$ many ancillary qubits in the initial state $\ket{0^a}\in\H_a$ and outputs the single bit outcome of measuring the first qubit in the computational basis $\{\Pi_0,\Pi_1\}$, and
    \item (completeness) if $w\in L$, then $\exists \ket{\psi_w}\in\H_m$  such that $\Norm{\Pi_1 C_w \ket{\psi_w}\ket{0^a}}^2\geq 2/3$, and
    \item (soundness) if $w\notin L$, then $\forall \ket\psi\in\H_m$, $\Norm{\Pi_1 C_w \ket{\psi}\ket{0^a}}^2\leq 1/3$,
\end{itemize}
where for convenience we have used $\Pi_1$ to implicitly mean $\Pi_1\otimes\qid_{m+a-1}$.

If in the second and third bullet point in the definition of $\QMA$ we restrict the witness states $\ket{\psi_w}$ and $\ket\psi$ to be classical, i.e.\ basis vectors $\{\ket{y}\mid y\in\{0,1\}^m\}$ in the computational basis, then we get the class $\QCMA$.

\paragraph{\#BQP:}
A function $f:\{0,1\}^*\to\IN$ is in $\sharpBQP$
if there is a polynomial time Turing machine that on input $w$, $|w|=n$, outputs a quantum circuit $C_w$ acting on $m+a$ qubits with $m,a=\poly n$  such that there exists a decomposition $\H_m = \mathcal A_w \oplus \mathcal R_w$ of $\H_m$ into linear subspaces
of so-called \emph{accepting and rejecting witnesses} such that
\begin{samepage}
\begin{itemize}
 \item $\Norm{\Pi_1 C_w\ket\psi\ket{0^a}}^2 \geq 2/3$ for all $|\psi\rangle \in \mathcal A_w$, and
 \item $\Norm{\Pi_1 C_w\ket\psi\ket{0^a}}^2 \leq 1/3$ for all $|\psi\rangle \in \mathcal R_w$, and
 \item $f(w)=\dim\mathcal A_w$.
\end{itemize}

See \cite{SZ09,BFS11} for a more detailed discussion of this complexity class.
\end{samepage}
The circuit $C_w$ can be considered the quantum verifier in a $\QMA$ protocol, and $f(w)$ is the dimension of the subspace of witnesses accepted with probability at least $2/3$.

The acceptance and rejection conditions above are described by an operator $E_w:=(\qid_m\otimes\bra{0^a})C_w^{\dagger}(\Pi_1\otimes\qid_{m+a-1})C_w(\qid_m\otimes\ket{0^a})$ on $\H_m$, which is a positive operator because $\qid_m\otimes\ket{0^a}$ is an isometric embedding $\H_m\hookrightarrow\H_{m+a}$, $C_w$ is a unitary on $\H_{m+a}$, and $\Pi_1\otimes\qid_{m+a-1}$ is a projector on $\H_{m+a}$. However this operator might not in general define a \emph{PVM} on $\H_m$; but for our purpose we will construct it to be one, as in \cite{Bravyi2023quantum}. Consequently, when $E_w$ is a projector, $\mathcal A_w=\im E_{w}$, $\dim\mathcal A_w=\tr E_{w}$. In this case, $C_w$ accepts witness states in $\mathcal A_w$ with certainty, and rejects states in $\mathcal R_w$ with certainty, corresponding to the class $\QMA_1^1$ with perfect soundness and completeness. It is known that there is a quantum oracle relative to which $\QMA_1\neq \QMA$ for the case of perfect completeness \cite{aaronson2008perfect}. Interestingly, it is known in contrast that $\QCMA_1=\QCMA$, under any gateset in which the Hadamard and reversible classical operations can be implemented exactly \cite{Jordan12qcma}.

We say that a PVM $\{\Pi_i\}$ on $n$-qubits can be efficiently implemented if we can construct a polysize unitary circuit $U_i$ acting on $n+a$ many qubits where $a=\poly n$ such that on an arbitrary input state $\ket\psi\ket{0^a}\in\bigotimes^{n+a}\C^2$ the circuit sets a designated qubit to $1$ with probability $p_i:=\braket{\psi|\Pi_i|\psi}$. To be more precise, the circuit implements a unitary map $\ket\psi\ket{0^a}\ket{0}\mapsto \sqrt{p_i}\ket\psi\ket{0^a}\ket{1} + \ee^{\ii\theta}\sqrt{1-p_i}\ket\psi\ket{0^a}\ket{0}$ where $\theta\in[0,2\pi)$ can be an arbitrary phase. 

Note that using $n^2$ bits one can encode a permutation $\pi\in\aS_n$ as its permutation matrix, or more specifically, as its $n^2$ entries when reading row-wise from left to right. Hence,
for our \probl{Kron} problem, $\IC[\aS_n]$ can be embedded into $\H_m$ with $m = n^2$
such that any function
$f \in \IC[\aS_n]$ with $||f||_2 = 1$ can be written in the computational basis as
\[
    \ket{f} = \sum_{\sigma\in \aS_n}f(\sigma)\ket{\sigma}.
\]
Here, $\ket\sigma$ is shorthand for $\ket{\mathrm{enc}(\sigma)}$, with $\mathrm{enc}(\sigma)$ being the $m$-bit string that encodes $\sigma$. Similarly, the input register $\tensor^3\H_m$ of the quantum algorithm can be thought of as $\IC[\aS_n]\otimes\IC[\aS_n]\otimes\IC[\aS_n]$.
The basis vectors in the Fourier basis $\IC[\aS_n]\simeq [\la]\otimes[\la]$
are indexed by triples $(\la,i,j)$,
where $\la\vdash n$ and $i$ and $j$ are standard tableaux of shape~$\la$.
The quantum Fourier transform
maps a basis vector in the Fourier basis to a basis vector in the computational basis, with a suitable computational basis encoding for the triple $(\la,i,j)$.

We write
$\rightrightarrows$
to denote a Hermitian idempotent linear map, i.e., a Hermitian projector.
Note that if two Hermitian projectors commute, then their composition is again a Hermitian projector.
 \cite{Bravyi2023quantum} explain how the following two Hermitian projectors can be efficiently implemented: 
\begin{itemize}
    \item Weak Fourier sampling $\Pi^\la : \IC[\aS_n] \rightrightarrows \IC[\aS_n]^{\la,\la} \ = \
    \IC[\aS_n]^{\la,\ast} \ = \ \IC[\aS_n]^{\ast,\la} \ \simeq \ [\la]\otimes[\la]$. The set of pairwise orthogonal projectors $M=\{\Pi^{\la}\mid\la\vdash n\}$ is a PVM on $\H_m$ and can be implemented using the Weak Fourier sampling circuit \cite[Fig.\ 1a]{Bravyi2023quantum} that acts on $\H_m$ and an ancillary register $\H_{n-1}$ into which the partition label can be copied. A measurement is performed on the ancilla and the circuit rejects unless the outcome is $\la\vdash n$. If the procedure on input $\ket\psi$ accepts, then the post-measurement state of the $\H_m$ register is proportional to $\Pi^{\la}\ket\psi$.
    \item
    For any $\aS_k$-representation $(V,\varrho)$ we have the 
    generalized phase estimation procedure $P^{\aS_k} : V \rightrightarrows V^{(k)}$.
    Here we require that
    $V$ has a basis $B = \{v_i\}$ labeled by polynomially sized labels $\textsf{label}(v_i)$, and that
    $\forall \pi \in \aS_k : \pi v_i \in B$,
    and
    there is a uniform classical circuit that on input $(\pi,\textsf{label}(v))$, where $\pi\in\aS_k$,
    outputs $\textsf{label}(\pi(v))$.
    This will be obvious in all our cases:
    The product of transposition matrices can be computed using a small circuit, and the inverse of a transposition matrix (needed for the right action) is just its transpose. 
    
    The projector $P^{\aS_k}$ can be implemented using the generalized phase estimation circuit \cite[Fig.\ 1b]{Bravyi2023quantum} that acts on $\H_m\otimes\H_v$ where the first register is interpreted as an ancillary register, and the second one encodes $V$. At the end it implements $\Pi^{(n)}$ on the ancilla: The procedure rejects unless the measurement outcome is the trivial representation $(n)$. If the procedure on input $\ket\psi$ accepts, then the post-measurement state of the $\H_v$ register is proportional to $P^{\aS_k}\ket\psi$.
\end{itemize}


Non-adaptive intermediate measurements in a circuit can be deferred to the end without affecting the probability distribution of the output. The multiqubit measurements required in the above two cases can instead be replaced by a postprocessing circuit that checks if the registers to be measured are in the required state, and sets a single ancilla to $\ket1$ if yes and $\ket0$ otherwise.

We note that a subtlety arises in our use of $P^{\aS_k}$: The action will be an action of different subgroups $\aS_k\subseteq\aS_n$ 
on $\IC[\aS_n]$, 
sometimes from the left, and sometimes from the right. Both actions are readily implementable using small circuits.

Even though more general constructions are possible (see e.g.\ \cite{MRR06}), we only use these two constructions as building blocks to achieve our goal, but we make a refined use of generalized phase estimation.

\section{Proof of Theorem~\ref{thm:kronBQP}: Three-step construction}
\label{sec:construction}

To indicate more clearly which tensor position is which (in particular when reordering the positions), we write
$\IC[\aS_n]_1\otimes\IC[\aS_n]_2\otimes\IC[\aS_n]_3$
instead of just $\IC[\aS_n]\otimes\IC[\aS_n]\otimes\IC[\aS_n]$.
We will construct an idempotent endomorphism of 
$\IC[\aS_n]_1\otimes\IC[\aS_n]_2\otimes\IC[\aS_n]_3$
whose image has dimension $k(\la,\mu,\nu)$, and it will be a composition of commuting Hermitian projectors, each of which is either implementable via weak Fourier sampling or via generalized phase estimation. This then finishes the proof of Theorem~\ref{thm:kronBQP}.
The overview of the maps is depicted in Figure~\ref{fig:maps}.
\begin{figure}
    \centering
    \begin{eqnarray*}
\IC[\aS_n]_1\otimes\IC[\aS_n]_2\otimes\IC[\aS_n]_3
&\simeq&
\bigoplus_{\substack{\la\vdash n\\\mu\vdash n\\\nu\vdash n}} 
\Big(
[\la]_{\sL_1}\otimes[\la]_{\sR_1}\otimes[\mu]_{\sL_2}\otimes[\mu]_{\sR_2} \otimes [\nu]_{\sL_3}\otimes[\nu]_{\sR_3}
\Big)
\\
&\stackrel{\Pi_{1}^\la \otimes \Pi_{2}^\mu \otimes \Pi_{3}^\nu}{\rightrightarrows} &
[\la]_{\sL_1}\otimes[\la]_{\sR_1}\otimes[\mu]_{\sL_2}\otimes[\mu]_{\sR_2} \otimes [\nu]_{\sL_3}\otimes[\nu]_{\sR_3}
\\
&\stackrel{P_{\sL}^{\aS_n}}{\rightrightarrows} & 
([\la]_{\sL_1}
\otimes[\mu]_{\sL_2}
\otimes [\nu]_{\sL_3})^{\aS_n}
\otimes[\la]_{\sR_1}
\otimes[\mu]_{\sR_2}
\otimes[\nu]_{\sR_3}
\\
&\stackrel{P_{\sR_1}^{\aS_\la} \otimes P_{\sR_2}^{\aS_\mu} \otimes P_{\sR_3}^{\aS_\nu}}{\rightrightarrows}&
\underbrace{([\la]_{\sL_1} \otimes [\mu]_{\sL_2}
\otimes[\nu]_{\sL_3})^{\aS_n}}_{\dim=k(\la,\mu,\nu)}
\otimes\underbrace{[\la]_{\sR_1}^{\aS_\la}\otimes[\mu]_{\sR_2}^{\aS_\mu}  \otimes[\nu]_{\sR_3}^{\aS_\nu}}_{\dim=1}
\end{eqnarray*}
    \caption{The commuting Hermitian projectors that project to a space of dimension $k(\la,\mu,\nu)$.}
    \label{fig:maps}
\end{figure}
On
$\IC[\aS_n]_1\otimes\IC[\aS_n]_2\otimes\IC[\aS_n]_3$
we have an action of $(\aS_n)^6$: on each of the three factors we have an action of $(\aS_n)^2$, and the three actions of $(\aS_n)^2$ commute.

\subsection{First step: Projection to the isotypic component}

The following three Hermitian projectors commute and hence can be composed into one Hermitian projector:
\begin{eqnarray*}
\Pi_{1}^\la \otimes \Id_2 \otimes \Id_3  &:&  \IC[\aS_n]_1\otimes\IC[\aS_n]_2\otimes\IC[\aS_n]_3 \ \rightrightarrows \  
\IC[\aS_n]_1^{\la,\la}\otimes\IC[\aS_n]_2\otimes\IC[\aS_n]_3,
\\
\Id_1 \otimes \Pi_{2}^\mu \otimes \Id_3  &:&  \IC[\aS_n]_1\otimes\IC[\aS_n]_2\otimes\IC[\aS_n]_3 \ \rightrightarrows \  \IC[\aS_n]_1\otimes\IC[\aS_n]_2^{\mu,\mu}\otimes\IC[\aS_n]_3,
\\
\Id_1 \otimes \Id_2 \otimes \Pi_{3}^\mu  &:&  \IC[\aS_n]_1\otimes\IC[\aS_n]_2\otimes\IC[\aS_n]_3 \ \rightrightarrows \  \IC[\aS_n]_1\otimes\IC[\aS_n]_2\otimes\IC[\aS_n]_3^{\nu,\nu}.
\end{eqnarray*}

The composition of these three idempotents is called $\Pi_{1}^\la\otimes \Pi_{2}^\mu \otimes \Pi_{3}^\nu$.
It projects $\IC[\aS_n]_1\otimes\IC[\aS_n]_2\otimes\IC[\aS_n]_3$ to
the $(\la,\la,\mu,\mu,\nu,\nu)$-isotypic component $(\IC[\aS_n]_1\otimes\IC[\aS_n]_2\otimes\IC[\aS_n]_3)^{\la,\la,\mu,\mu,\nu,\nu}$.
Recall that
\begin{eqnarray*}
(\IC[\aS_n]_1\otimes\IC[\aS_n]_2\otimes\IC[\aS_n]_3)^{\la,\la,\mu,\mu,\nu,\nu}
&=&
(\IC[\aS_n]_1\otimes\IC[\aS_n]_2\otimes\IC[\aS_n]_3)^{\la,\ast,\mu,\ast,\nu,\ast}
\\
&=&
(\IC[\aS_n]_1\otimes\IC[\aS_n]_2\otimes\IC[\aS_n]_3)^{\ast,\la,\ast,\mu,\ast,\nu}
\end{eqnarray*}


Note that $\Pi_{1}^\la\otimes \Pi_{2}^\mu \otimes \Pi_{3}^\nu$ can be implemented by a quantum circuit using weak Fourier sampling three times \cite{Bravyi2023quantum}.

\subsection{Second step: Projection to the global invariants from the left}

We embed $\aS_n\hookrightarrow(\aS_n)^6$ via $\pi\mapsto(\pi,\idd,\pi,\idd,\pi,\idd)$.
We define $P_\sL^{\aS_n}$ to be the projector to the $\aS_n$-invariant space (called $Q$ in \cite{Bravyi2023quantum}).

Note that $P_\sL^{\aS_n}$ and $\Pi_{1}^\la\otimes \Pi_{2}^\mu \otimes \Pi_{3}^\nu$ commute, because $\Pi_{1}^\la\otimes \Pi_{2}^\mu \otimes \Pi_{3}^\nu$ can be defined in terms of the right actions, i.e., the unused three copies of $\aS_n$:
$(\Pi_{1}^\la\otimes \Pi_{2}^\mu \otimes \Pi_{3}^\nu)(V) = (V)^{\ast,\la,\ast,\mu,\ast,\nu}$.
We conclude that 
$P_\sL^{\aS_n} \circ \Pi_{1}^\la\otimes \Pi_{2}^\mu \otimes \Pi_{3}^\nu$ is idempotent.

\subsection{Third step: Refinement from the right}
For an $\aS_n$-representation $V$ and a partition $\la\vdash n$
we define $P^{\aS_\la}:V\to V^{\aS_\la}$ to be the projection to the Young subgroup invariant space of the action of $\aS_\la\subseteq\aS_n$.
Note that the elements from different factors $\aS_{\la_i}$ of $\aS_\la$ commute. Hence $P^{\aS_\la}$ can be written as the composition of commuting projectors to invariant spaces:
\[
P^{\aS_\la} \ = \ P^{\aS_{\la_1}} \circ P^{\aS_{\la_2}} \circ \cdots
\]
Each factor can be implemented using generalized phase estimation.
On $\IC[\aS_n]_1\otimes\IC[\aS_n]_2\otimes\IC[\aS_n]_3$
we have three of these projectors:
$P_{\sR_1}^{\aS_\la}$, $P_{\sR_2}^{\aS_\la}$,
and $P_{\sR_3}^{\aS_\la}$, using copies $2$, $4$, and $6$ of $(\aS_n)^6$.
Since they use different copies of $\aS_n$, these three projectors commute, and we call their composition 
$P_{\sR_1}^{\aS_\la} \otimes P_{\sR_2}^{\aS_\la} \otimes P_{\sR_3}^{\aS_\la}$.

Clearly $P_{\sR_1}^{\aS_\la} \otimes P_{\sR_2}^{\aS_\la} \otimes P_{\sR_3}^{\aS_\la}$ commutes with $P_\sL^{\aS_n}$, because they use disjoint copies of $\aS_n$.
A crucial insight is that the projector $\Pi_{1}^\la\otimes \Pi_{2}^\mu \otimes \Pi_{3}^\nu$ can also be expressed in terms of the \emph{left} actions:
$(\Pi_{1}^\la\otimes \Pi_{2}^\mu \otimes \Pi_{3}^\nu)(V) = V^{\la,\ast,\mu,\ast,\nu,\ast}$.
Hence, $\Pi_{1}^\la\otimes \Pi_{2}^\mu \otimes \Pi_{3}^\nu$ and $P_{\sR_1}^{\aS_\la} \otimes P_{\sR_2}^{\aS_\la} \otimes P_{\sR_3}^{\aS_\la}$ commute.
We conclude that 
$P_{\sR_1}^{\aS_\la} \otimes P_{\sR_2}^{\aS_\la} \otimes P_{\sR_3}^{\aS_\la} \circ P_\sL^{\aS_n} \circ \Pi_{1}^\la\otimes \Pi_{2}^\mu \otimes \Pi_{3}^\nu$ is a Hermitian projector.

\subsection{Kronecker dimension}
To prove Theorem~\ref{thm:kronBQP}
it remains to prove that $\dim\left(\im P_{\sR_1}^{\aS_\la} \otimes P_{\sR_2}^{\aS_\la} \otimes P_{\sR_3}^{\aS_\la} \circ P_\sL^{\aS_n} \circ \Pi_{1}^\la\otimes \Pi_{2}^\mu \otimes \Pi_{3}^\nu\right) = k(\la,\mu,\nu)$.
This proof is illustrated in Figure~\ref{fig:maps}.
It is instructive to think about the spaces in their Fourier basis, i.e.,
$\IC[\aS_n]_1\otimes\IC[\aS_n]_2\otimes\IC[\aS_n]_3
\simeq
\bigoplus_{{\la\vdash n,\,\mu\vdash n,\,\nu\vdash n}} 
\Big(
[\la]_{\sL_1}\otimes[\la]_{\sR_1}\otimes[\mu]_{\sL_2}\otimes[\mu]_{\sR_2} \otimes [\nu]_{\sL_3}\otimes[\nu]_{\sR_3}
\Big)
$.
The first projector $\Pi_{1}^\la \otimes \Pi_{2}^\mu \otimes \Pi_{3}^\nu$ maps onto the $(\la,\la,\mu,\mu,\nu,\nu)$-isotypic component, which is
$[\la]_{\sL_1}\otimes[\la]_{\sR_1}\otimes[\mu]_{\sL_2}\otimes[\mu]_{\sR_2} \otimes [\nu]_{\sL_3}\otimes[\nu]_{\sR_3}$.
The second projector $P_{\sL}^{\aS_n}$ maps
this space
to the left invariant space
$(([\la]_{\sL_1}
\otimes[\mu]_{\sL_2}
\otimes [\nu]_{\sL_3})^{\aS_n}
\otimes[\la]_{\sR_1}
\otimes[\mu]_{\sR_2}
\otimes[\nu]_{\sR_3}$.
The third projector $P_{\sR_1}^{\aS_\la} \otimes P_{\sR_2}^{\aS_\mu} \otimes P_{\sR_3}^{\aS_\nu}$
maps each right Specht module to a one dimensional space, according to Lemma~\ref{lem:dimone}:
$([\la]_{\sL_1} \otimes [\mu]_{\sL_2}
\otimes[\nu]_{\sL_3})^{\aS_n}
\otimes([\la]_{\sR_1})^{\aS_\la}\otimes([\mu]_{\sR_2})^{\aS_\mu}  \otimes([\nu]_{\sR_3})^{\aS_\nu}$.
By \eqref{eq:defkron}, the dimension of this last vector space is $k(\la,\mu,\nu)$.

\section{Proof of Theorem~\ref{thm:plethBQP}: An analogous construction}
We embed $\varphi:\aS_d \hookrightarrow \aS_{md}$ via
$\varphi(\sigma)(ad-d+b) := \sigma(a)d-d+b$ for all $1\leq a \leq d$, $1 \leq b \leq m$.
Consider the partition $(m^d) := (m,m,\ldots,m)$ of $md$, and 
consider the Young subgroup $\aS_{(m^d)} \simeq \aS_m \times \aS_m \times \cdots \times \aS_m \subset \aS_{md}$.
The wreath product $\aS_m\wr\aS_d \subset \aS_{md}$ is the subgroup generated by $\aS_{(m^d)}$ and the image of $\varphi$.
For an $\aS_{md}$-representation $V$ the invariant space $V^{\aS_{(m^d)}}$ carries the action of $\aS_{d}$ via $\varphi$, and its invariant space $(V^{\aS_{(m^d)}})^{\aS_d}$ is exactly the invariant space $V^{\aS_d\wr\aS_m}$.

For a $\GL(V)$-representation $W$, the symmetric power $\Sym^d W$ is again a $\GL(V)$-representation, isomorphic to the invariant space $(\tensor^d W)^{\aS_d}$.
The plethysm coefficient $a_\la(d,m)$ is the multiplicity of the irreducible $\GL(V)$-representation of type $\la$ in $\Sym^d(\Sym^m V)$.
This can also be expressed in terms of the symmetric group:
$a_\la(d,m)$ is the dimension of the wreath product invariant space $[\la]^{\aS_m\wr\aS_d}$.
This can readily be seen from Schur-Weyl duality \cite[eq.~(9.1)]{GW09}
\begin{eqnarray*}
\tensor^{dm}V
&\simeq&
\bigoplus_{\la} S_\la(V) \otimes [\la]
\\
\Longrightarrow \qquad \Sym^d(\Sym^m V)
\simeq
((\tensor^{dm}V)^{\aS_{(m^d)}})^{\aS_d}
\simeq
(\tensor^{dm}V)^{\aS_m\wr\aS_d}
&\simeq&
\bigoplus_{\la} S_\la(V) \otimes [\la]^{\aS_m\wr\aS_d}
\end{eqnarray*}

The proof of Theorem~\ref{thm:plethBQP} is analogous to the proof of  Theorem~\ref{thm:kronBQP} with some minor adjustments.
The commuting projectors are as follows:
    \begin{eqnarray*}
\IC[\aS_n]
&\simeq&
\bigoplus_{\substack{\la\vdash n}} 
\ 
[\la]_{\sL}\otimes[\la]_{\sR}
\\
&\stackrel{\Pi^\la}{\rightrightarrows} &
[\la]_{\sL}\otimes[\la]_{\sR}
\\
&\stackrel{P_{\sL}^{\aS_m\wr\aS_d}}{\rightrightarrows} & 
[\la]_{\sL}^{\aS_m\wr\aS_d}
\otimes[\nu]_{\sR}
\\
&\stackrel{P_{\sR}^{\aS_\la}}{\rightrightarrows}&
[\la]_{\sL}^{\aS_m\wr\aS_d}
\otimes[\nu]_{\sR}^{\aS_\la}
\end{eqnarray*}
The projectors commute for the same reason as in the proof of Theorem~\ref{thm:kronBQP}, and
\[
\dim\im(P_\sR^{\aS_\la}\circ P_\la^{\aS_m\wr\aS_d}\circ\Pi^\la) = \dim[\la]^{\aS_m\wr\aS_d} = a_{\la}(d,m).
\]
$P_{\sL}^{\aS_m\wr\aS_d}$ can be implemented via two commuting projectors:
$P_{\sL}^{\aS_m\wr\aS_d} = P_{\sL}^{\aS_{(d)}}\circ P_{\sL}^{\aS_{(m^d)}}$,
where for $P_{\sL}^{\aS_{(d)}}$ we use the action via $\varphi$.
This proves Theorem~\ref{thm:plethBQP}.

\parahead{Acknowledgements: }We would like to thank Michael Walter for helpful comments on our draft.


\printbibliography

\end{document}